\title{Complexity of the Existence of Constrained Secure Equilibria in
  Multi-Player Games}
\newtheorem{pro}{Proposition}
\newtheorem{thm}{Theorem}
\theoremstyle{definition}
\newtheorem{dfn}{Definition}
\newtheorem{ex}{Example}
\theoremstyle{remark}
\newcommand{\llr}[1]{\langle\!\langle{#1}\rangle\!\rangle}
\newcommand{\calA}{\mathcal{A}}
\newcommand{\calG}{\mathcal{G}}
\newcommand{\Inf}{\mathit{inf}}
\newcommand{\Pay}{\mathit{Pay}}
\newcommand{\Out}{\mathit{Out}}
\newcommand{\Buchi}{\relax
  \ifmmode\text{\textit{B\"{u}chi}}\else B\"{u}chi\fi}
\newcommand{\coBuchi}{\relax
  \ifmmode\text{\textit{co-B\"{u}chi}}\else co-B\"{u}chi\fi}
\newcommand{\Parity}{\mathit{Parity}}
\newcommand{\Rabin}{\mathit{Rabin}}
\newcommand{\Streett}{\mathit{Streett}}
\newcommand{\Muller}{\mathit{Muller}}
\newcommand{\Nat}{\mathbb{N}}
\newcommand{\subarena}[2]{{#1\mathbin{\upharpoonright}#2}}
\newcommand{\SE}{\mathit{SE}}
\begin{document}
\maketitle
\begin{summary}
We consider a multi-player non-zero-sum turn-based game
(abbreviated as multi-player game) on a finite directed graph.
A secure equilibrium (SE) is a strategy profile in which
no player has the incentive to deviate from the strategy
because no player can increase her own payoff or lower the payoff of another player.
SE is a promising refinement of Nash equilibrium in which
a player does not care the payoff of another player.
In this paper, we discuss the decidability and complexity of the problem
of deciding whether a secure equilibrium with constraints
(a payoff profile specifying which players must win) exists
for a given multi-player game.
%
\end{summary}
\begin{keywords}
multi-player game, secure equilibrium, winning region
\end{keywords}


\section{Introduction}
Games can be seen as mathematical models
for analyzing the behaviors of multiple objects
competing or cooperating with one another.
One application of games
is the \emph{reactive synthesis}~\cite{BCJ18},
in which given a specification of the input and output of a system,
a program satisfying that specification is automatically synthesized.

In a typical setting of the reactive synthesis,
one player (called the system)
in a game represents a reactive program to be synthesized,
and the other players (called the environments)
model the behavior of uncontrollable entities.
Each uncontrollable entity is modeled as a player who
has her own \emph{objective} and
acts in a way that best fulfills the objective.
A solution of the reactive synthesis problem is
a locally-optimal tuple of the behaviors (or \emph{strategies})
of the players,
called an \emph{equilibrium},
where no player has incentive to change her strategy.

\emph{Nash equilibrium} (NE) is a well-known
definition of equilibrium,
where
no player can improve her payoff
(1 if her objective is satisfied and 0 otherwise)
by changing her strategy alone.
For broad classes of games, the existence of NE is guaranteed.
However, NE does not consider the payoffs of the other players.
Therefore,
when we adopt an NE as a solution of the reactive synthesis problem,
we cannot guarantee that an uncontrollable entity $p$
never chooses a behavior that interferes with the system
without changing $p$'s payoff.
To overcome the weakness, refinements of NE such as
secure equilibrium \cite{CHJ04}
have been proposed.

\emph{Secure equilibrium} (SE) is the one where
no player can improve her payoff or
decrease the payoff of any other player without decreasing her payoff
by changing her strategy alone.
Chatterjee et al.\ \cite{CHJ04} showed that there always exists
an SE from every state of a given two-player game.
Bruy\`{e}re et al.\ \cite{BMR14} extended the notion of SE to two-player weighted games.
They also proved that \emph{the constrained SE existence problem},
which is the problem to decide whether
there exists an SE that yields a given tuple of payoffs, is decidable
for several payoff measures.
Usually, the aim of a reactive synthesis is to find an equilibrium
where the system wins.
The constrained existence problem is a more general problem where
one can specify which players must win under an equilibrium.
Pril et al.\ \cite{PFK14} showed that
there always exists an SE from every state of a given multi-player (non-weighted) game.
However, the constrained SE existence problem was not considered in \cite{PFK14}.

In this paper, we study the decidability and complexity of
the constrained SE existence problem in multi-player games.
The results are summarized in Table~\ref{table:comp}.
The classes of objectives such as \Buchi\ and \coBuchi\ are
described in Section~\ref{subsec:objective}.

\begin{table}[tb]
  \centering
  \caption{The complexity of the constrained SE existence problem.}
  \label{table:comp}
  \newcommand{\results}[2]{%
    \begin{tabular}{c}#1 \\[-1pt] \scriptsize (#2)\end{tabular}}
  \begin{tabular}{cccc}
  \toprule
    & \multicolumn{3}{c}{Objectives} \\
  \cmidrule(lr){2-4}
    & \Buchi & \coBuchi & Parity \\
  \midrule
    Upper bound
    & \results{co-NP}{Th.~\ref{thm:upper-buchi}}
    & \results{NP}{Th.~\ref{thm:upper-cobuchi}}
    & \results{PSPACE}{Th.~\ref{thm:upper-muller}} \\
    Lower bound
    & \multicolumn{2}{c}{\results{P-hard}{Th.~\ref{thm:lower-buchi}}}
    & \results{NP-hard, co-NP-hard}{%
        Th.~\ref{thm:lower-parity-np} \& \ref{thm:lower-parity-conp}} \\
  \bottomrule
  \end{tabular}
\end{table}


\section{Definitions}\label{sec:def}


Let $\Nat=\{0, 1, 2, \ldots \}$ be the set of all natural numbers including~$0$.
For $i,j\in\Nat$,
let $[i,j]=\{k\in\Nat \mid i\le k\le j\}$.
For a set $S$, let
$S^*$ and $S^\omega$ denote the sets of
all finite and infinite sequences
on~$S$, respectively.
For an infinite sequence $\rho\in S^\omega$,
let $\Inf(\rho)$ denote the set of elements
appearing infinitely often in~$\rho$.

\subsection{Game Arena}
\label{subsec:arena}
\begin{dfn}
  An $n$-player \emph{game arena} is a triple
  $\calA = (S, (S_i)_{i\in[1,n]}, E)$,
  where $S$ is a nonempty set of \emph{states},
  $(S_i)_{i\in[1,n]}$ is a decomposition of~$S$,
  and $E\subseteq (S\times S)$ is a \emph{transition relation}
  between two states.
  $S_i$ represents the set of states controlled by player~$i$.
\end{dfn}

For a game arena $\calA = (S, (S_i)_{i\in[1,n]}, E)$
and a subset $S'\subseteq S$ of states,
let $\subarena{\calA}{S'}$ denote the sub-arena
obtained from $\calA$ by restricting the set of states to~$S'$;
i.e.,
$\subarena{\calA}{S'} =
 (S', (S_i\cap S')_{i\in[1,n]}, E\cap(S'\times S'))$.

A \emph{play} is an infinite sequence
$\rho = s_0s_1\ldots \in S^\omega$
of states satisfying $(s_k,s_{k+1})\in E$ for all $k\ge 0$.
In other words,
a play is an infinite sequence of states along with
the transition relation of a game arena.

\subsection{Winning Objectives}
\label{subsec:objective}

In this paper, we assume that the result
a player obtains from a play is either a winning or a losing.
Each player has her own winning condition over plays,
and we model the condition as a subset $\varphi$ of plays;
i.e., the player wins if the play belongs to the subset $\varphi$.

A \emph{winning objective} (or simply, \emph{objective})
$\varphi$ is a subset of $S^\omega$.
For a play $\rho$,
a player with an objective $\varphi$ \emph{wins}
if $\rho\in\varphi$ and
the player \emph{loses} if $\rho\notin\varphi$.

In this paper,
we consider the following classic and important
classes of objectives.
\begin{description}[font=\mdseries\itshape,nosep,topsep=\smallskipamount]
\item[\Buchi\ objective]
  is given by a subset $B\subseteq S$ as\\
  $\Buchi(B)=\{\,\rho\in S^{\omega} \mid \Inf(\rho)\cap B\neq \emptyset\,\}$.
\item[Co-\Buchi\ objective]
  is given by a subset $C\subseteq S$ as\\
  $\coBuchi(C)=\{\,\rho\in S^{\omega} \mid \Inf(\rho)\cap C = \emptyset\,\}$.
\item[Parity objective]
  is given by a coloring function $p:S\to\Nat$ as
  $\Parity(p)=\{\,\rho\in S^{\omega} \mid
   \min_{s\in\Inf(\rho)} p(s) \text{ is even}\,\}$.
\item[Streett objective]
  is given by a subset $T \subseteq 2^{S}\times 2^{S}$ as\\
  $\Streett(T)=\bigcap _{(F,G)\in T} \bigl(\coBuchi(F)\cup \Buchi(G)\bigr)$.
\item[Rabin objective]
  is given by a subset $R \subseteq 2^{S}\times 2^{S}$ as\\
  $\Rabin(R)=\bigcup _{(F,G)\in R} \bigl(\Buchi(F)\cap \coBuchi(G)\bigr)$.
\item[Muller objective]
  is given by a Boolean formula $\phi$ over $S$ as
  $\Muller(\phi)=\{\,\rho\in S^{\omega} \mid \Inf(\rho)\models\phi\,\}$,
  where
  $\Inf(\rho)\models\phi$ means that $\phi$ is evaluated to $1$
  under the truth assignment $\theta: S\rightarrow \{0,1\}$
  such that $\theta(s)=1 \Leftrightarrow s\in\Inf(\rho)$.
\end{description}

Among the above classes of objectives,
the class of Muller objectives is the most expressive,
and any objective in the other classes can be translated
in polynomial time
into an equivalent Muller objective.

For an objective $\varphi$,
let $\neg\varphi = S^\omega\setminus\varphi$.
By definition, we can obtain
the following proposition.
\begin{pro}\label{buchi-pro}
The following equations hold for
$B_1,B_2\subseteq S$ and
$p:S\to\Nat$.
\begin{align*}
  \neg\Buchi(B_1) &= \coBuchi(B_1), \\
  \Buchi(B_1) \cup \Buchi(B_2) &= \Buchi(B_1 \cup B_2), \\
  \coBuchi(B_1)\cap\coBuchi(B_2) &= \coBuchi(B_1 \cup B_2), \\
  \neg\Parity(p) &= \Parity(p+1). \qquad\!\qed
\end{align*}
\end{pro}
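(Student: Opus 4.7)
The plan is to verify each of the four equalities by directly unfolding the definitions of $\Buchi$, $\coBuchi$, and $\Parity$ given in Section~\ref{subsec:objective}; no additional machinery is needed, since these identities are essentially propositional manipulations on the sets $\Inf(\rho)$, which is well-defined and nonempty for every $\rho \in S^\omega$ (the arena being finite).

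For the first equality, I would take an arbitrary $\rho \in S^\omega$ and observe that $\rho \in \neg\Buchi(B_1)$ is, by definition of $\neg$, equivalent to $\neg(\Inf(\rho)\cap B_1 \neq \emptyset)$, i.e., $\Inf(\rho)\cap B_1 = \emptyset$, which is exactly the membership condition for $\coBuchi(B_1)$. For the second and third equalities, the argument is the standard distributivity of intersection/emptiness over union: $\Inf(\rho)\cap(B_1\cup B_2) \neq \emptyset$ iff $\Inf(\rho)\cap B_1 \neq \emptyset$ or $\Inf(\rho)\cap B_2 \neq \emptyset$ (giving the \Buchi\ identity), and the negation of the latter, combined with a De~Morgan step, yields the \coBuchi\ identity.

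For the fourth equality I would argue as follows. Since $\Inf(\rho)$ is a nonempty subset of $S$, the value $m(\rho) := \min_{s\in \Inf(\rho)} p(s)$ is a well-defined natural number, and the shifted coloring $p+1$ satisfies $\min_{s\in \Inf(\rho)} (p(s)+1) = m(\rho)+1$. Hence the minimum color appearing infinitely often under $p+1$ is even iff $m(\rho)$ is odd iff $\rho \notin \Parity(p)$, which yields $\Parity(p+1) = \neg\Parity(p)$.

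There is no real obstacle here; the statement is a bookkeeping lemma used later to manipulate objectives. The only point that deserves a line of justification is that $\Inf(\rho)$ is always nonempty (so that the minimum in the definition of $\Parity$ always exists), which holds because $S$ is finite and $\rho$ is infinite.
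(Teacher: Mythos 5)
Your proof is correct and matches the paper's treatment: the paper gives no explicit proof, asserting the proposition holds ``by definition,'' which is exactly the definition-unfolding you carry out (your added remark that $\Inf(\rho)$ is nonempty for finite $S$, so the minimum in the parity condition exists, is the one genuinely worthwhile observation).
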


A \emph{game} is a pair $\calG =(\calA,\Phi)$ of
a game arena $\calA$
and an \emph{objective profile}
$\Phi = (\varphi_1,\ldots,\varphi_n)$,
which is the tuple of objectives of players $1$ to~$n$.
If the objectives of the players are complementary,
i.e.,
$\varphi_i\cap\varphi_j = \emptyset$
for any distinct $i,j\in[1,n]$ and
$\bigcup_{i\in[1,n]}\varphi_i = S^\omega$,
then we say the game is \emph{zero-sum}.
Otherwise, the game is \emph{non-zero-sum}.
This paper mainly concerns non-zero-sum games.

\subsection{Strategies}\label{subsec:strategy}

A \emph{strategy} $\sigma_i$ of player~$i\in[1,n]$
is a function
$\sigma_i:S^*S_i\to S$, which
maps a finite sequence $hs\in S^*S_i$
ending with a state $s\in S_i$ controlled by $i$
into a succeeding state $s'$
satisfying $(s,s')\in E$.

For a subset $P\subseteq[1,n]$ of players,
a \emph{strategy profile of $P$} is a tuple
$\Sigma_P = (\sigma_i)_{i\in P}$ of strategies.
We call a strategy profile
$\Sigma = (\sigma_1,\ldots,\sigma_n)$
of all players simply a \emph{strategy profile}.

For player~$i$'s strategy $\sigma_i$ and a state $s_0$,
we define the set $\Out_{s_0}(\sigma_i)$ of plays
starting with $s_0$ and consistent with $\sigma_i$
as follows.
\begin{equation*}
  \Out_{s_0}(\sigma_i) = \{
    \begin{aligned}[t]
      &s_0s_1\ldots\in S^{\omega} \mid
      s_{k+1}=\sigma_i(s_0s_1\ldots s_k)
      \text{ for}\\
      &\quad \text{all } k\ge0 \text{ such that }
      s_k \in S_i \}.
    \end{aligned}
\end{equation*}
Moreover,
for a strategy profile $\Sigma_P = (\sigma_i)_{i\in P}$
of a subset $P$ of players, we define
$\Out_{s_0}(\Sigma_P) = \bigcap_{i \in P} \Out_{s_0}(\sigma_i)$.
Note that
for a strategy profile $\Sigma$ (of all players),
$\Out_{s_0}(\Sigma)$ is a singleton.
In the following,
we consider $\Out_{s_0}(\Sigma)$ represents a play
instead of a singleton of a play.

Given an objective profile
$\Phi = (\varphi_1,\ldots,\varphi_n)$,
a strategy profile $\Sigma$,
and a starting state $s$,
the \emph{payoff profile}
$\Pay_s^\Sigma(\Phi) = (v_1,\ldots,v_n)\in \{0,1\}^n$
the players obtain when starting a play with $s$ and
all players follow $\Sigma$ is determined;
for each player $i\in[1,n]$,
$v_i=1$ if $Out_s(\Sigma)\in\varphi_i$ and
$v_i=0$ otherwise.
In a zero-sum game,
exactly one component of a payoff profile should be~$1$,
while in a non-zero-sum game,
any number of components of a payoff profile may be~$1$.

Given a game arena $\calA$, an objective $\varphi$, and
a subset $P\subseteq[1,n]$ of players,
we define the \emph{winning region}
$\llr{P}_{\calA}(\varphi)$
as the set of starting states from which
players in $P$ can cooperate to make $\varphi$
be satisfied.
Formally,
$\llr{P}_{\calA}(\varphi)$ is defined as follows.
\begin{equation*}
  \llr{P}_{\calA}(\varphi) = \{
      s\in S \mid
    \begin{aligned}[t]
      &\Out_s(\Sigma_P)\subseteq\varphi
      \text{ for some strategy}\\
      &\text{profile }\Sigma_P
      \text{ of }P \}.
    \end{aligned}
\end{equation*}
We write $\llr{P}_{\calA}(\varphi)$ as
$\llr{P}(\varphi)$ if $\calA$ is clear from the context.
Moreover,
we sometimes write
$\llr{\{p_1,\ldots,p_k\}}(\varphi)$
as $\llr{p_1,\ldots,p_k}(\varphi)$
for simplicity.

The following proposition on winning regions is known.
\begin{pro}[{\cite{GH82}}]
\label{determinacy}
  Consider an $n$-player game and
  let $\varphi$ be an objective,
  $I\subseteq[1,n]$ a subset of players,
  and $J = [1,n]\setminus I$.
  Then, $\llr{I}(\varphi) = S\setminus \llr{J}(\neg\varphi)$.
  \qed
\end{pro}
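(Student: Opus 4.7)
My plan is to reduce the statement to the determinacy of two-player zero-sum games on finite graphs, for which the cited result~\cite{GH82} applies directly.

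First I would prove the easy inclusion $\llr{I}(\varphi) \cap \llr{J}(\neg\varphi) = \emptyset$. Suppose toward contradiction that some state $s$ lies in both. Let $\Sigma_I$ and $\Sigma_J$ be strategy profiles witnessing $s \in \llr{I}(\varphi)$ and $s \in \llr{J}(\neg\varphi)$, respectively. Since $I$ and $J$ partition $[1,n]$, their union $\Sigma = \Sigma_I \cup \Sigma_J$ is a full strategy profile, so $\Out_s(\Sigma)$ is a single play. This play belongs simultaneously to $\Out_s(\Sigma_I) \subseteq \varphi$ and $\Out_s(\Sigma_J) \subseteq \neg\varphi$, which is impossible.

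For the harder inclusion $S \setminus \llr{I}(\varphi) \subseteq \llr{J}(\neg\varphi)$, the key observation is that a coalition strategy profile $(\sigma_i)_{i\in P}$ for $P\subseteq[1,n]$ acts only on the pairwise disjoint pieces $S_i$, so it is equivalent to a single function on $\bigcup_{i\in P} S_i$: given the tuple, aggregate by case analysis on which $S_i$ the current control state belongs to; given a single aggregated strategy, restrict to each $S_i$ to recover the components. One easily checks that aggregation preserves the outcome set $\Out_s(\cdot)$. Using this, I would view $\calA$ as a two-player zero-sum arena $\calA'$ in which one protagonist owns $S_I$ and the antagonist owns $S_J$, equipped with the complementary objectives $\varphi$ and $\neg\varphi$. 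Under this reformulation, $\llr{I}_{\calA}(\varphi)$ and $\llr{J}_{\calA}(\neg\varphi)$ coincide with the usual two-player winning regions in $\calA'$. Invoking determinacy for two-player zero-sum $\omega$-regular games (which covers all the objective classes introduced above, as they are all Muller-definable) then yields that every state belongs to one of the two regions, giving the desired inclusion.

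The main obstacle is of course the hard inclusion: it is precisely where nontrivial content lies, and the argument is not self-contained but appeals to the classical determinacy theorem. The reduction from the $n$-player coalition setting to the two-player zero-sum setting is essentially bookkeeping, but it is the step that must be made rigorous in order to legitimately cite the two-player result in the present multi-player formulation.
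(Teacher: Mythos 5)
Your proof is correct and matches the paper's treatment: the paper states this proposition without proof, citing the classical determinacy result of Gurevich and Harrington, and your reduction---merging the coalitions $I$ and $J$ into two zero-sum players and invoking determinacy for the (Muller-definable) objectives used here---is exactly the content of that citation, with the easy disjointness direction argued as one would expect. No gaps.
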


\subsection{Secure Equilibria}\label{subsec:se}

For player~$i\in[1,n]$,
we define the \emph{preference order} $\prec_i$
over payoff profiles as follows:
For payoff profiles $v=(v_1,\ldots,v_n)$ and
$v'=(v'_1,\ldots,v'_n)$,
%
\begin{equation*}
  v \prec_i v' \Leftrightarrow (v_i < v'_i) \lor
  \bigl(
    \begin{aligned}[t]
      (v_i = v'_i) &\land (\forall j.\ v_j \geq v'_j) \\
      {} &\land (\exists j.\ v_j > v'_j)\bigr).
    \end{aligned}
\end{equation*}
That is,
we consider player~$i$ prefers
a payoff profile $v'$ rather than $v$
if changing $v$ into $v'$ increases $i$'s payoff
or keeps $i$'s payoff the same,
does not increase
the other players' payoffs, and
decreases at least one player's payoff.

We define the \emph{secure equilibrium} below.
In the definition, we use a notation
for derived strategy profiles:
For a strategy profile
$\Sigma = (\sigma_1,\ldots,\sigma_n)$ and
a strategy $\sigma'_i$ of player~$i\in[1,n]$,
let
$(\sigma'_i,\Sigma^{-i})$ denote
the strategy profile
$(\sigma_1,\ldots,\sigma_{i-1},
  \sigma'_i,\sigma_{i+1},\ldots,\sigma_n)$.

\begin{dfn}
  Let $\calG = (\calA,\Phi)$ be a game.
  If a strategy profile
  $\Sigma$ 
  satisfies the following condition at a state~$s$,
  then we call $\Sigma$ a \emph{secure equilibrium (SE)}
  at~$s$.
\begin{quote}
   For all $i\in[1,n]$,
   there is no strategy $\sigma'_i$ of player~$i$
   that satisfies
   $\Pay_s^\Sigma(\Phi) \prec_i\Pay_s^{(\sigma'_i,\Sigma^{-i})}(\Phi)$.
\end{quote}  
\end{dfn}

For a game $\calG$ and a payoff profile $v$,
let $\SE_v$ ($\subseteq S$) denote the subset
of states at which there exists a secure equilibrium
that yields the payoff profile~$v$.


\section{Characterization of SE in Multi-Player Games}

This paper is aimed at investigating
the complexity of the following decision problem.
%
\begin{dfn}\label{def:existence-se}
  The \emph{constrained secure equilibrium existence problem} is the decision problem defined by the following instance and question:
  \begin{description}[font=\bfseries,nosep,topsep=\smallskipamount]
    \item[Instance]
      An $n$-player game $\calG = (\calA,\Phi)$,
      a starting state $s$,
      and a vector $v\in\{0,1\}^n$ called \emph{constraint}.
    \item[Question]
      Is there an SE $\Sigma$ at~$s$ satisfying
      $\Pay_s^\Sigma(\Phi) = v$?
  \end{description}
\end{dfn}

The question of the above problem can be rephrased as
``$s\in\SE_v$?\,''
We approach this problem by
representing the subset $\SE_v$ of states
using winning regions.
Let a given game be $\calG=(\calA,\Phi)$ where
$\Phi=(\varphi_1,\ldots,\varphi_n)$
and a given constraint be $v=(v_1,\ldots,v_n)$.
Let $I=[1,n]$ be the set of players and
subsets $W, L$ of players be
$W=\{\,i \mid v_i = 1\,\}$ and
$L=\{\,i \mid v_i = 0\,\}$, respectively.
Let $\varphi_W=\bigcap_{i\in W}\varphi_i$
and $\varphi_L=\bigcup_{i\in L}\varphi_i$.
The following algorithm computes the set~$\SE_v$.
\begin{enumerate}
  \item Compute
    $A_{v} =
    \begin{aligned}[t]
    \textstyle\bigcap_{w\in W}
      &\llr{I\setminus\{w\}} \bigl(
        \varphi_W \cup \varphi_L \cup \neg\varphi_w
      \bigr) \cap {} \\
    \textstyle\bigcap_{l\in L}
      &\llr{I\setminus\{l\}} \bigl(
        (\varphi_W \cup \varphi_L) \cap \neg\varphi_l
      \bigr).
    \end{aligned}$
  \item Compute
    $\llr{I}_{\subarena{\calA}{A_v}}
      \bigl(\varphi_W \cap \neg\varphi_L\bigr)$.
\end{enumerate}

$A_v$ is the subset of states from which
for each player~$i$,
the other players in $I\setminus\{i\}$ can cooperate
to prevent the payoff profile
becoming better than $v$ for~$i$.
For example,
consider a player~$w$ who is specified as a winner in
the constraint $v=(v_1,\ldots,v_n)$;
i.e., $v_w = 1$.
Player~$w$ prefers a payoff profile $v'$ rather than $v$
if $w$ is still a winner in $v'$,
all players who are specified as losers in $v$
are still losers in $v'$, and
at least one player who is specified as a winner in $v$
becomes a loser in~$v'$.
In other words, it is better for $w$ that
an objective
$\neg\varphi_W \cap \neg\varphi_L \cap \varphi_w$
is satisfied
(i.e.,
at least one player in $W$ loses,
no player in $L$ wins, and
$w$ wins).
To form a secure equilibrium,
the other players in $I\setminus\{w\}$
have to force the negation
$\varphi_W \cup \varphi_L \cup \neg\varphi_w$
of the above objective
to be satisfied
regardless of $w$'s strategy.
The first line of the definition of $A_v$
represents the subset of states
such that the players in $I\setminus\{w\}$
have a strategy profile $\Sigma_{I\setminus\{w\}}$ of them
(called a \emph{retaliation strategy profile})
that forces
$\varphi_W \cup \varphi_L \cup \neg\varphi_w$
to be satisfied
when a play starts from a state in the subset.
The second line of the definition of $A_v$ is similar.

The winning region
$\llr{I}_{\subarena{\calA}{A_v}}\bigl(\varphi_W \cap \neg\varphi_L\bigr)$
in Step~2 represents
the set of states
such that
there is a strategy profile $\Sigma$
(called a \emph{cooperation strategy profile})
that keeps a play inside of $A_v$ and
yields the payoff profile equal to $v$
when the play starts from a state in the region.

The following theorem shows
the winning region computed in Step~2 equals~$\SE_v$.
\begin{thm}\label{thm:correctness}
  Let $\calG = (\calA,\Phi)$ where
  $\Phi=(\varphi_1,\ldots,\varphi_n)$
  and
  $v=(v_1,\ldots,v_n)$ be a given game and constraint,
  respectively.
  Let $I=[1,n]$,
  $W=\{\,i \mid v_i = 1\,\}$,
  $L=\{\,i \mid v_i = 0\,\}$,
  $\varphi_W=\bigcap_{i\in W}\varphi_i$, and
  $\varphi_L=\bigcup_{i\in L}\varphi_i$.
  Then, equation
  $\SE_v = \llr{I}_{\subarena{\calA}{A_v}}\bigl(
    \varphi_W \cap \neg\varphi_L \bigr)$ holds,
  where
  \begin{align*}
    A_{v} ={}
      \textstyle \bigcap_{w\in W} \llr{I\setminus\{w\}}\bigl(
        \varphi_W \cup \varphi_L \cup \neg\varphi_w
      \bigr)& \cap{} \\
      \textstyle \bigcap_{l \in L} \llr{I\setminus\{l\}}\bigl(
        (\varphi_W \cup \varphi_L) \cap \neg\varphi_l
      \bigr)&.
  \end{align*}
\end{thm}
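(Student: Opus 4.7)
My plan is to prove the set equality by two inclusions. For $\supseteq$ I combine a single cooperation profile witnessing membership in the right-hand side with an ensemble of retaliation profiles extracted from $A_v$ to assemble an actual SE at $s$. For $\subseteq$ I argue contrapositively: starting from any SE at $s$ with payoff $v$, I show the (unique) outcome play stays inside $A_v$ and already lies in $\varphi_W\cap\neg\varphi_L$, so the same profile witnesses membership in the right-hand side.

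For the soundness direction ($\supseteq$), I fix $s\in\llr{I}_{\subarena{\calA}{A_v}}(\varphi_W\cap\neg\varphi_L)$ and pick a witnessing cooperation profile $\Sigma^{\ast}=(\sigma_1^{\ast},\ldots,\sigma_n^{\ast})$ whose outcome from $s$ stays in $A_v$ and satisfies $\varphi_W\cap\neg\varphi_L$. For each $s'\in A_v$ and each $w\in W$ (respectively $l\in L$), I select a retaliation profile $\Sigma^{w,s'}_{I\setminus\{w\}}$ witnessing $s'\in\llr{I\setminus\{w\}}(\varphi_W\cup\varphi_L\cup\neg\varphi_w)$ (respectively $\Sigma^{l,s'}_{I\setminus\{l\}}$ for the analogous membership). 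I then define the global profile $\Sigma$ by: each player $i$ plays $\sigma_i^{\ast}$ while the history remains consistent with $\Sigma^{\ast}$, and switches to the $i$-component of $\Sigma^{j,s'}_{I\setminus\{j\}}$ the instant some $j\ne i$ deviates for the first time at a state $s'$; note that $s'$ lies on the cooperation play so $s'\in A_v$, making the required retaliation available. The outcome $\Out_s(\Sigma)$ is exactly the cooperation play, so the payoff equals $v$. For the SE condition, any deviation $\sigma_i'$ of player $i$ that actually changes the outcome has a first deviation state $s'$ beyond which the remaining players apply retaliation; for $i\in W$ this enforces $\varphi_W\cup\varphi_L\cup\neg\varphi_i$, and unfolding $\prec_i$ shows the only payoffs strictly $\prec_i$-above $v$ would require $\neg\varphi_W\cap\neg\varphi_L\cap\varphi_i$ and are therefore excluded; the argument for $i\in L$ uses the retaliation objective $(\varphi_W\cup\varphi_L)\cap\neg\varphi_i$ symmetrically.

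For the completeness direction ($\subseteq$), let $\Sigma$ be an SE at $s$ with $\Pay_s^\Sigma(\Phi)=v$ and write $\Out_s(\Sigma)=s_0s_1\cdots$. The payoff condition immediately implies $\Out_s(\Sigma)\in\varphi_W\cap\neg\varphi_L$. Suppose toward a contradiction that some $s_k$ lies outside $\llr{I\setminus\{w\}}(\varphi_W\cup\varphi_L\cup\neg\varphi_w)$ for some $w\in W$. By determinacy (Proposition~\ref{determinacy}), $s_k\in\llr{w}(\neg\varphi_W\cap\neg\varphi_L\cap\varphi_w)$, so $w$ alone has a strategy $\tau$ from $s_k$ forcing that objective. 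Let $\sigma_w'$ agree with $\sigma_w$ until $s_k$ and play $\tau$ afterwards; the new payoff $v'$ satisfies $v'_w=1$, $v'_l=0$ for every $l\in L$, and $v'_{w'}=0$ for at least one $w'\in W$. The three clauses of $\prec_w$ are immediately verified, yielding $v\prec_w v'$ and contradicting the SE property of $\Sigma$. The case $s_k\notin\llr{I\setminus\{l\}}((\varphi_W\cup\varphi_L)\cap\neg\varphi_l)$ for some $l\in L$ is entirely analogous. Hence every $s_k$ lies in $A_v$, so $\Sigma$ can be regarded as a profile in $\subarena{\calA}{A_v}$ whose outcome lies in $\varphi_W\cap\neg\varphi_L$, giving $s\in\llr{I}_{\subarena{\calA}{A_v}}(\varphi_W\cap\neg\varphi_L)$.

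The main obstacle I anticipate is the $\prec_i$ case analysis that underpins both directions. The preference order mixes a weak clause ``$\forall j.\ v_j\ge v'_j$'' with a strict witness ``$\exists j.\ v_j>v'_j$'', and one must check that each retaliation objective rules out exactly the payoffs $v'$ with $v\prec_i v'$, and conversely that a failure of $s_k\in A_v$ yields a concretely profitable deviation; the fact that the choice of retaliation objective splits cleanly along $W$ and $L$ is what makes this work. The strategy-gluing bookkeeping in the soundness construction is routine, as deviations in the SE test are single-player and every deviation point necessarily belongs to $A_v$.
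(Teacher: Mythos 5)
Your proposal is correct and follows essentially the same route as the paper: the $\supseteq$ direction glues the cooperation profile with retaliation profiles triggered at the first deviation point, and the $\subseteq$ direction uses determinacy (Proposition~\ref{determinacy}) to extract a profitable deviation whenever the SE outcome would leave $A_v$. The only cosmetic difference is that the paper phrases the second inclusion contrapositively (explicitly decomposing the complement of the right-hand side), whereas you argue directly that the SE outcome play must stay in $A_v$; the underlying case analysis on $\prec_i$ is identical.
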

\begin{proof}
  First we show $\llr{I}_{\subarena{\calA}{A_v}}
     \bigl(\varphi_W \cap \neg\varphi_L\bigr)
     \subseteq \SE_v$.
  Suppose $s\in \llr{I}_{\subarena{\calA}{A_v}}
    \bigl(\varphi_W \cap \neg\varphi_L\bigr)$.
  This means that there is a cooperation strategy profile
  $\Sigma=(\sigma_1,\ldots,\sigma_n)$
  that keeps a play inside of $A_v$ and makes
  $\varphi_W \cap \neg\varphi_L$ be satisfied
  when the play starts from~$s$.
  Then, we construct a strategy profile
  $\Sigma'=(\sigma'_1,\ldots,\sigma'_n)$
  as follows.
  Each $\sigma'_i$ behaves as the same as $\sigma_i$
  as long as all the other players behave
  as the same as
  $(\sigma_1,\ldots,\sigma_{i-1},\sigma_{i+1},\ldots,\sigma_n)$.
  If some player~$i$ deviates from~$\Sigma$,
  i.e., at some point of a play
  player~$i$ chooses a state different from the one
  $\sigma_i$ chooses,
  then the other players' strategy profile
  $(\sigma'_1,\ldots,\sigma'_{i-1},\sigma'_{i+1},\ldots,\sigma'_n)$ switches its behavior
  to the one that makes
  $\varphi_W \cup \varphi_L \cup \neg\varphi_i$
  (resp.\ $(\varphi_W \cup \varphi_L) \cap \neg\varphi_i$)
  be satisfied when $i\in W$ (resp.\ $i\in L$).
  The outcome play of $\Sigma'$ is the same as
  $\Sigma$ (i.e.\ $\Out_s(\Sigma')=\Out_s(\Sigma)$)
  because every player behaves as the same as $\Sigma$
  as long as the other players behave so.
  Hence, $\Sigma'$ yields the payoff profile equal to~$v$.
  If some player~$i$ deviates from $\Sigma'$, then
  she cannot obtain a payoff profile better than $v$
  with respect to~$\prec_i$.
  Therefore, $\Sigma'$ is an SE
  that yields the payoff profile equal to~$v$.

  Next we show $\SE_v\subseteq
  \llr{I}_{\subarena{\calA}{A_v}}
    \bigl(\varphi_W \cap \neg\varphi_L\bigr)$
  by showing its contraposition.
  The complement
  $S\setminus \llr{I}_{\subarena{\calA}{A_v}}
    \bigl(\varphi_W \cap \neg\varphi_L\bigr)$
  of the right-hand side of the inclusion
  equals
  $\bigl(A_v\setminus\llr{I}_{\subarena{\calA}{A_v}}
   \bigl(\varphi_W \cap \neg\varphi_L\bigr)\bigr) \cup
   \bigl(S\setminus A_v\bigr)$.
  By Proposition~\ref{determinacy},
  $A_v\setminus\llr{I}_{\subarena{\calA}{A_v}}
     \bigl(\varphi_W \cap \neg\varphi_L\bigr)
  = \llr{\emptyset}_{\subarena{\calA}{A_v}}
     \bigl(\neg\varphi_W \cup \varphi_L\bigr)$ and
  $S\setminus A_v=\bigcup_{w \in W} \llr{\{w\}}
  \bigl(\neg\varphi_W \cap \neg\varphi_L \cap \varphi_w \bigr) \cup
  \bigcup_{l\in L}\llr{\{l\}}
  \bigl((\neg\varphi_W \cap \neg\varphi_L) \cup \varphi_l \bigr)$.
  By the latter equation,
  for any starting state $s\in S\setminus A_v$,
  there exists some player~$i$ who
  has a strategy that yields better payoff profile than $v$
  with respect to~$\prec_i$
  regardless of the strategy profile of the other players.
  (If $s\in \llr{\{w\}}\bigl(\neg\varphi_W\cap\neg\varphi_L\cap\varphi_w\bigr)$
    for some $w\in W$, then the player~$w$ has such a strategy.
    It is similar if
    $s\in \llr{\{l\}}\bigl((\neg\varphi_W\cap\neg\varphi_L)\cup\varphi_l\bigr)$
    for some $l\in L$.)
  This means that any strategy profile that
  yields the payoff profile equal to~$v$ and
  eventually visits a state $s'\in S\setminus A_v$
  is never an SE, because
  there is at least one player who
  can obtain better payoff profile than $v$
  by changing her strategy
  to the one that switches its behavior after reaching~$s'$.

  Suppose $s\in S\setminus\llr{I}_{\subarena{\calA}{A_v}}
    \bigl(\varphi_W \cap \neg\varphi_L\bigr)$.
  Then, $s\in \llr{\emptyset}_{\subarena{\calA}{A_v}}
    \bigl(\neg\varphi_W \cup \varphi_L\bigr)$ or
  $s\in S\setminus A_v$ hold.
  If $s\in S\setminus A_v$, then
  as mentioned above,
  there is no SE that yields the payoff profile equal to~$v$
  when starting from~$s$.
  If $s \in \llr{\emptyset}_{\subarena{\calA}{A_v}}
    \bigl(\neg\varphi_W \cup \varphi_L \bigr)$,
  then for the starting state~$s$,
  every strategy profile that keeps
  a play inside of $A_v$ does not yield the payoff profile
  equal to~$v$.
  Moreover, as mentioned above, any strategy profile
  that yields the payoff profile equal to~$v$ and
  visits a state outside of $A_v$ is not an SE\@.
  Therefore, for the starting state $s$,
  there is no SE that yields the payoff profile equal to~$v$.
  Thus,
  $SE_v\subseteq \llr{I}_{\subarena{\calA}{A_v}}
    \bigl(\varphi_W \cap \neg\varphi_L\bigr)$.
\end{proof}

\begin{ex}
  Figure~\ref{fig:buchigame} shows the arena of
  a 3-player game $\calG=(\calA,(\varphi_1,\varphi_2,\varphi_3))$
  where each of $\varphi_1,\varphi_2,\varphi_3$ is a \Buchi\ objective.
  Each circle represents a state player~$1$ controls,
  i.e.\ a member of~$S_1$.
  A rectangle represents a state player~$2$ controls and
  a diamond represents a state player~$3$ controls.
  The objectives of players $1, 2, 3$ are
  $\Buchi(\{s_2,s_4\})$, $\Buchi(\{s_0,s_5\})$, and
  $\Buchi(\{s_1,s_3\})$, respectively.
  The vector attached to each of $s_3, s_4, s_5$ represents
  the payoff profile obtained when a play reaches the attached state.
  We consider the constrained secure equilibrium existence problem
  for $\calG$, starting state $s_0$, and constraint $(1,1,1)$.

\begin{figure}[tb]
\centering
\begin{tikzpicture}[node distance=12mm,->,>=latex]
  \tikzstyle{circle} =[shape=circle,    draw, inner sep=0pt,  minimum size=1.8em]
  \tikzstyle{diamond}=[shape=diamond,   draw, inner sep=0pt,  minimum size=2.3em]
  \tikzstyle{rect}   =[shape=rectangle, draw, inner ysep=0pt, minimum size=1.6em]
  \tikzstyle{every node} = [font=\small]
  \node[circle] (s0) at (0, 0) {$s_0$};
  \node[rect]   (s1) at (1, -1.73) {$s_1$};
  \node[diamond](s2) at (2, 0) {$s_2$};
  \node[circle] (s3) [left of=s0] {$s_3$};
  \node[circle] (s4) [right of=s1] {$s_4$};
  \node[circle] (s5) [right of=s2] {$s_5$};
  \path
    (s0) edge [bend right] (s1)
    (s1) edge (s0)
    (s1) edge [bend right] (s2)
    (s2) edge (s1)
    (s2) edge [bend right] (s0)
    (s0) edge (s2)
    (s0) edge (s3)
    (s1) edge (s4)
    (s2) edge (s5)
    (s3) edge [loop below] (s3)
    (s4) edge [loop below] (s4)
    (s5) edge [loop below] (s5)
  ;
%
  \tikzstyle{every node} = [font=\scriptsize]
  \node [at=(s3.west), left] {$(0{,}0{,}1)$};
  \node [at=(s4.east), right] {$(1{,}0{,}0)$};
  \node [at=(s5.east), right] {$(0{,}1{,}0)$};
\end{tikzpicture}
\caption{A 3-player game with \Buchi\ objectives.}
\label{fig:buchigame}
\end{figure}
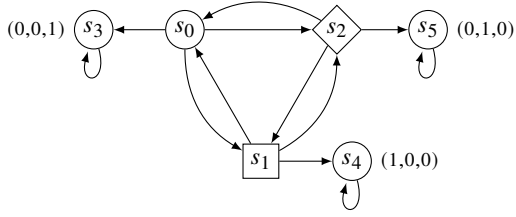

  As mentioned above,
  $A_{(1,1,1)}=
    \llr{1,2}
      \bigl(\varphi_{123} \cup \neg\varphi_3 \bigr) \cap
    \llr{2,3}
      \bigl(\varphi_{123} \cup \neg\varphi_1 \bigr) \cap
    \llr{1,3}
      \bigl(\varphi_{123} \cup \neg\varphi_2 \bigr)$ where
  $\varphi_{123}=\varphi_1\cap\varphi_2\cap\varphi_3$.
  Using an algorithm described in the proof of Theorem~\ref{thm:upper-buchi}
  in Section~\ref{upper},
  we obtain
  $\llr{1,2}\bigl(\varphi_{123}\cup \neg\varphi_3 \bigr)
     =\{s_0,s_1,s_2,s_4,s_5\}$,
  $\llr{2,3}\bigl(\varphi_{123}\cup \neg\varphi_1 \bigr)
     =\{s_0,s_1,s_2,s_3,s_5\}$,
  $\llr{1,3}\bigl(\varphi_{123}\cup \neg\varphi_2 \bigr)
     =\{s_0,s_1,s_2,s_3,s_4\}$,
  and thus $A_{(1,1,1)}=\{s_0,s_1,s_2\}$.
  Moreover, $\llr{I}_{\subarena{\calA}{A_{(1,1,1)}}}
    (\varphi_1 \cap \varphi_2 \cap \varphi_3)=\{s_0,s_1,s_2\}$,
  and therefore for the starting state~$s_0$,
  there exists an SE that yields the payoff profile equal to $(1,1,1)$.
  In fact, we can construct the following strategy profile
  $\Sigma = (\sigma_1,\sigma_2,\sigma_3)$
  that is an SE giving the payoff profile equal to~$(1,1,1)$:
\begin{align*}
  \sigma_1(hs_0) &= \begin{cases}
      s_1 & \text{if $h=\epsilon$ or } h=h's_2 \\
      s_3 & \text{if } h=h's_1
    \end{cases} \\
  \sigma_2(hs_1) &= \begin{cases}
      s_2 & \text{if $h=\epsilon$ or } h=h's_0 \\
      s_4 & \text{if } h=h's_2
    \end{cases} \\
  \sigma_3(hs_2) &= \begin{cases}
      s_0 & \text{if $h=\epsilon$ or } h=h's_1 \\
      s_5 & \text{if } h=h's_0
    \end{cases}
\end{align*}
  where $h,h'\in S^*$.
  As long as every player conforms to $\Sigma=(\sigma_1,\sigma_2,\sigma_3)$,
  the outcome play is $s_0s_1s_2s_0s_1s_2\ldots$\,,
  which yields payoff profile $(1,1,1)$.
  If some player deviates from $\Sigma$,
  then the other players retaliate against the deviating player
  by choosing $s_3$, $s_4$, or $s_5$ as the next state.
  For example, when player~$1$ chooses $s_2$ as the next state at~$s_0$,
  player~$3$ at $s_2$ selects $s_5$ as the next state.
  The payoff profile becomes $(0,1,0)$,
  which is not better than $(1,1,1)$ for player~$1$.
  The above $\Sigma=(\sigma_1,\sigma_2,\sigma_3)$ is one example of SEs
  and there is another SE whose outcome play is
  $s_0s_2s_1s_0s_2s_1\ldots$\,.
\end{ex}


\section{The Complexity of the Constrained SE Existence Problem}

In this section, we investigate the complexity of
the constrained secure equilibrium existence problem.

\subsection{Upper Bounds}
\label{upper}

\begin{thm}\label{thm:upper-buchi}
  For an $n$-player game
  where every player has a \Buchi\ objective,
  the constrained SE existence problem is in co-NP\@.
\end{thm}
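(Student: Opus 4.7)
\emph{Proof plan.}
The plan is to prove that the complement question ``$s \notin \SE_v$?'' lies in NP, which yields the desired co-NP upper bound. By Theorem~\ref{thm:correctness}, $s \notin \SE_v$ holds precisely when either (i)~$s \notin A_v$, or (ii)~$s \in A_v$ but no infinite play from $s$ in $\subarena{\calA}{A_v}$ satisfies $\varphi_W \cap \neg\varphi_L$. The NP verifier I construct will accept a nondeterministic guess of one of two witness types, one per case.

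For case~(i), the plan is to exploit that each defection objective is a Rabin condition on the would-be defector. Setting $B' = \bigcup_{l\in L}B_l$ and using Proposition~\ref{buchi-pro}, one obtains $\neg\varphi_W \cap \neg\varphi_L \cap \varphi_w = \bigcup_{i\in W}(\Buchi(B_w)\cap\coBuchi(B_i\cup B'))$ for $w\in W$, and analogously $(\neg\varphi_W \cap \neg\varphi_L)\cup\varphi_l$ is Rabin for $l\in L$. By Proposition~\ref{determinacy}, $s\notin A_v$ iff some player $p\in W\cup L$ wins the corresponding Rabin game from $s$ against the coalition of the other players, and since the Rabin player admits a positional winning strategy, the case-(i) witness is a pair $(p,\tau_p)$ with $\tau_p$ positional. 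Fixing $\tau_p$ reduces the arena to a 1-player game for the coalition, and verifying the win reduces to checking emptiness of a 1-player Streett objective (the complement of the Rabin condition), which is polynomial.

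The subtle case is~(ii), where one must certify a property of $\subarena{\calA}{A_v}$ without explicitly deciding $A_v$-membership, which is itself only co-NP. The plan here is an over-approximation trick: the witness is a subset $A'\subseteq S$ with $s\in A'$, together with, for every $s'\notin A'$, a pair $(p_{s'},\tau_{s'})$ where $\tau_{s'}$ is a positional strategy of $p_{s'}$. The verifier checks (a)~each $\tau_{s'}$ is a Rabin-winning strategy for $p_{s'}$ from $s'$ (as above), which certifies $A_v\subseteq A'$; and (b)~no infinite play from $s$ in $\subarena{\calA}{A'}$ satisfies $\varphi_W \cap \neg\varphi_L = \bigcap_{i\in W}\Buchi(B_i)\cap\coBuchi(B')$, which is 1-player Streett emptiness and polynomial (search for an SCC in $A'\setminus B'$, reachable from $s$ via edges inside $A'$, that intersects every $B_i$ with $i\in W$). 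Soundness is by monotonicity: $A_v\subseteq A'$ makes $\subarena{\calA}{A_v}$ a subgraph of $\subarena{\calA}{A'}$, so the absence of a good play in the latter forces its absence in the former. Completeness is by the choice $A' = A_v$, for which the required positional Rabin witnesses exist by definition of $A_v$.

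The main obstacle I anticipate is case~(ii): a naive algorithm would first test $s\in A_v$ (co-NP) and then test emptiness of the cooperation region in the sub-arena, giving only $\Sigma_2^P$ rather than co-NP. The over-approximation trick, which moves the hard direction of the $A_v$-membership test into existential positional witnesses attached to every excluded state, is the key device that collapses the whole complement into a single NP certificate and delivers the co-NP upper bound.
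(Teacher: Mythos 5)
Your proposal is correct, and it follows the same skeleton as the paper's proof: apply Theorem~\ref{thm:correctness}, observe that the coalition objectives defining $A_v$ are Streett conditions (equivalently, that each would-be deviator faces a Rabin condition), and reduce the final cooperation check to a 1-player Streett problem solvable in polynomial time. Where you genuinely go beyond the paper is in how the two stages are composed into a single co-NP algorithm. The paper's proof simply states that computing $A_v$ reduces to Streett games (co-NP) and that the subsequent winning-region computation on $\subarena{\calA}{A_v}$ is in P; read literally, running a co-NP computation to obtain $A_v$ and then a polynomial post-processing step only yields a $\mathrm{P}^{\mathrm{NP}}$ (i.e., $\Delta_2^P$) bound, and the naive ``guess a good lasso, then verify each of its states lies in $A_v$'' formulation of the positive question is $\exists\cdot\text{co-NP}$, i.e., $\Sigma_2^P$. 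Your over-approximation certificate --- guessing $A'\supseteq A_v$ together with a positional Rabin witness for every excluded state, then doing a polynomial emptiness check inside $A'$ --- performs exactly the quantifier swap needed to collapse the complement into a single NP verification, using positional determinacy of Rabin games to keep the certificate polynomial-size and soundness via monotonicity of the play set under $A_v\subseteq A'$. This is the right argument, and it makes explicit a step that the paper's own proof leaves implicit; your completeness direction (take $A'=A_v$) and soundness direction are both correct.
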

\begin{proof}
  Let $\calG=(\calA,(\varphi_1,\ldots,\varphi_n))$
  and $v=(v_1,\ldots,v_n)$ be
  the given game and constraint, and let
  $I=[1,n]$, $W=\{\,i\mid v_i=1\,\}$, and $L=I\setminus W$.
  Let $\varphi_W=\bigcap_{i\in W}\varphi_i$ and
  $\varphi_L=\bigcup_{i\in L}\varphi_i$.
  Let $S$ be the set of states in~$\calA$.

  Consider the computation of
  $A_{v} = \bigcap_{w\in W} \llr {I \setminus \{w\}}
    \bigl(\varphi_W \allowbreak \cup \varphi_L \cup \neg\varphi_w \bigr) \cap
    \bigcap_{l \in L} \llr {I \setminus \{l\}}
    \bigl( (\varphi_W \cup \varphi_L) \cap \neg\varphi_l \bigr)$.
  By Proposition~\ref{buchi-pro},
  $\varphi_L$ can be represented by a single \Buchi\ objective
  and $\neg\varphi_w$ can be represented by a \coBuchi\ objective.
  Since
  $\varphi_W \cup \varphi_L \cup \neg\varphi_w
   = \bigcap_{i \in W}
     \bigl((\varphi_i\cup\varphi_L) \cup \neg\varphi_w\bigr)$ and
  $\varphi_i \cup \varphi_L$ for each $i\in W$ can also be
  represented by a single \Buchi\ objective,
  $\varphi_W \cup \varphi_L \cup \neg\varphi_w$ can be
  represented by a single Streett objective.
  In a similar way,
  $(\varphi_W \cup \varphi_L) \cap \neg\varphi_l
   = \bigl(\bigcap_{i\in W} (\varphi_i \cup \varphi_L)\bigr)
     \cap \neg\varphi_l
   = \bigl(\bigcap_{i\in W} ((\varphi_i\cup\varphi_L)\cup\coBuchi(S))\bigr)
     \cap (\neg\varphi_l\cup\Buchi(\emptyset))$
  can also be represented by a single Streett objective.
  Therefore,
  $A_v$ can be obtained by computing winning regions
  in 2-player zero-sum games with Streett objectives,
  which is in co-NP~\cite{EJ88}.

  The computation of
  $\llr{I}_{\subarena{\calA}{A_v}}
   \bigl(\varphi_W \cap \neg\varphi_L \bigr)$
  can be achieved by computing a winning region
  in a 1-player game with a Streett objective,
  which is in P~\cite{CH12}.
\end{proof}

\begin{thm}\label{thm:upper-cobuchi}
  For an $n$-player game
  where every player has a \coBuchi\ objective,
  the constrained SE existence problem is in NP\@.
\end{thm}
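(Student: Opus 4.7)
My plan is to mirror the proof of Theorem~\ref{thm:upper-buchi} under the Büchi/co-Büchi duality, which swaps the roles of Streett and Rabin. By Proposition~\ref{buchi-pro}, with every $\varphi_i$ being a co-Büchi objective, $\varphi_W=\bigcap_{i\in W}\varphi_i$ collapses to a single co-Büchi objective $\coBuchi(C_W)$, each $\neg\varphi_w$ and $\neg\varphi_l$ is a single Büchi objective, and $\varphi_L$ remains a union of co-Büchi objectives indexed by~$L$.

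Next I would show that every composite objective inside the winning regions defining $A_v$ is expressible as a Rabin objective. The first family expands as $\varphi_W\cup\varphi_L\cup\neg\varphi_w = \coBuchi(C_W)\cup\bigcup_{i\in L}\coBuchi(C_i)\cup\Buchi(B_w)$, and each disjunct can be rewritten as a $\Buchi\cap\coBuchi$ pair (using $\Buchi(S)$ or $\coBuchi(\emptyset)$ as a trivial conjunct), making the whole union a single Rabin condition. Similarly, $(\varphi_W\cup\varphi_L)\cap\neg\varphi_l$ distributes over the union to $\bigcup_{i\in\{W\}\cup L}(\varphi_i\cap\neg\varphi_l)$, each summand being $\coBuchi\cap\Buchi$ and hence again Rabin. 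Because 2-player zero-sum games with a Rabin objective admit uniform memoryless winning strategies for the Rabin player~\cite{EJ88}, each winning region $\llr{I\setminus\{w\}}(\cdot)$ and $\llr{I\setminus\{l\}}(\cdot)$ in the definition of $A_v$ is decidable in NP: a certificate is a memoryless retaliation strategy of the coalition, verifiable in polynomial time by checking that the resulting 1-player game for the excluded player (whose objective becomes Streett) admits no winning play from the target state.

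Finally, Step~2 asks for $s\in\llr{I}_{\subarena{\calA}{A_v}}(\varphi_W\cap\neg\varphi_L)$, and $\varphi_W\cap\neg\varphi_L = \coBuchi(C_W)\cap\bigcap_{i\in L}\Buchi(B_i)$ is a Streett objective; since the full coalition jointly controls every state of $\subarena{\calA}{A_v}$, this is a 1-player Streett realizability test, in P by~\cite{CH12}. Assembling the two steps, the NP procedure guesses one memoryless retaliation strategy per $w\in W$ and per $l\in L$, verifies in P that each wins from the required states to certify $A_v$, and then decides Step~2 in P. The main delicacy is the careful rewriting of the composite objectives into Rabin form so that the NP bound for Rabin games applies uniformly, together with the use of positional determinacy to pack one memoryless retaliation per coalition into the certificate rather than one per state.
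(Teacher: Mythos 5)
Your proposal is correct and follows essentially the same route as the paper: collapse $\varphi_W$ to a single \coBuchi\ objective and each $\neg\varphi_w$, $\neg\varphi_l$ to a \Buchi\ objective via Proposition~\ref{buchi-pro}, rewrite both composite objectives in the definition of $A_v$ as single Rabin objectives (padding with trivial $\Buchi(S)$ or $\coBuchi(\emptyset)$ conjuncts and distributing the intersection over the union), invoke the NP bound for two-player Rabin games~\cite{EJ88}, and finish with a polynomial-time one-player Streett test for Step~2~\cite{CH12}. Your added explanation of why the Rabin bound is NP (memoryless certificates for the coalition, with an underapproximation of $A_v$ sufficing by monotonicity of Step~2) is a correct elaboration of what the paper leaves to the citation.
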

\begin{proof}
  Let $\calG=(\calA,(\varphi_1,\ldots,\varphi_n))$
  and $v=(v_1,\ldots,v_n)$ be the given game and constraint, and
  define
  $I$, $W$, $L$,
  $\varphi_W$, and
  $\varphi_L$ as the same as the proof of
  Theorem~\ref{thm:upper-buchi}.
  Let $S$ be the set of states in~$\calG$.

  By Proposition~\ref{buchi-pro},
  $\varphi_W$ can be represented by a single \coBuchi\ objective
  and $\neg\varphi_w$ can be represented by a \Buchi\ objective.
  Since
  $\varphi_W \cup \varphi_L \cup \neg\varphi_w
   = \varphi_W \cup \bigl(\bigcup_{i \in L} \varphi_i\bigr)
     \cup \neg\varphi_w
   = (\varphi_W \cap \Buchi(S)) \cup
     \bigl(\bigcup_{i\in L}(\varphi_i\cap\Buchi(S))\bigr)
     \cup (\neg\varphi_w\cap\coBuchi(\emptyset))$,
  $\varphi_W \cup \varphi_L \cup \neg\varphi_w$ can be
  represented by a single Rabin objective.
  In a similar way,
  $(\varphi_W \cup \varphi_L) \cap \neg\varphi_l
   = (\varphi_W\cap\neg\varphi_l)\cup
     \bigcup_{i\in L} (\varphi_i \cap\neg\varphi_l)$
  can also be represented by a single Rabin objective.
  Therefore,
  $A_v$ can be obtained by computing winning regions
  in 2-player zero-sum games with Rabin objectives,
  which is in NP~\cite{EJ88}.

  As the same as the proof of Theorem~\ref{thm:upper-buchi},
  the computation of
  $\llr{I}_{\subarena{\calA}{A_v}}
   \bigl(\varphi_W \cap \neg\varphi_L \bigr)$
  is in P~\cite{CH12}.
\end{proof}

\begin{thm}\label{thm:upper-muller}
  For an $n$-player game
  where every player has a Muller objective,
  the constrained SE existence problem is in PSPACE\@.
\end{thm}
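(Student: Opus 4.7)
The plan is to instantiate the algorithm underlying Theorem~\ref{thm:correctness} and bound every subroutine by a PSPACE computation. Let $\calG = (\calA, (\varphi_1,\ldots,\varphi_n))$, $v$, $I$, $W$, $L$, $\varphi_W$, and $\varphi_L$ be defined as in the proofs of Theorems~\ref{thm:upper-buchi} and~\ref{thm:upper-cobuchi}.

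First, I would observe that a Muller objective is specified by an arbitrary Boolean formula over $S$, so the class of Muller objectives is closed under $\cap$, $\cup$, and complementation, with the resulting formula only polynomially larger than the inputs. Hence, for each $w\in W$ the objective $\varphi_W \cup \varphi_L \cup \neg\varphi_w$ can be represented by a single Muller objective whose defining formula is of polynomial size in the input; analogously $(\varphi_W \cup \varphi_L) \cap \neg\varphi_l$ for each $l\in L$, and $\varphi_W \cap \neg\varphi_L$.

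Second, I would invoke the known fact that computing a winning region in a 2-player zero-sum Muller game is in PSPACE. Consequently each of the at most $n$ winning regions appearing in the definition of $A_v$ is computable in polynomial space, and their intersection $A_v\subseteq S$ can likewise be stored in polynomial space. For Step~2, computing $\llr{I}_{\subarena{\calA}{A_v}}\bigl(\varphi_W \cap \neg\varphi_L\bigr)$ amounts to solving a 1-player Muller game on the polynomial-size sub-arena $\subarena{\calA}{A_v}$, which is also in PSPACE. Finally, check whether the starting state $s$ lies in this winning region.

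The main obstacle is to ensure that the PSPACE bounds compose cleanly across the sequential invocations of the algorithm. This follows because every intermediate artefact is either a subset of $S$ or a polynomial-size Boolean formula, so the workspace of each call can be reused between calls; the standard closure of PSPACE under polynomial-time Turing reductions then yields the overall PSPACE bound.
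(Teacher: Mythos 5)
Your proposal is correct and follows essentially the same route as the paper: reduce each winning region in $A_v$ to a 2-player zero-sum Muller game solvable in PSPACE, and solve Step~2 as a 1-player Muller game. Your explicit remarks on the polynomial-size closure of Muller objectives under Boolean operations and on the composition of the PSPACE subroutines are details the paper leaves implicit, but they do not change the argument.
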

\begin{proof}
  We can consider $A_v$ as an intersection of
  winning regions
  in 2-player zero-sum games with Muller objectives,
  and computing each of these winning regions is in PSPACE~\cite{HD05}.
  Hence, the computation of $A_v$ is also in PSPACE\@.
  The computation of
  $\llr{I}_{\subarena{\calA}{A_v}}
   \bigl(\varphi_W \cap \neg\varphi_L \bigr)$
  can be achieved by computing a winning region
  in a 1-player game with a Muller objective,
  which is in P~\cite{CH12}.
\end{proof}

\subsection{Lower Bounds}

\begin{thm}\label{thm:lower-buchi}
  For a 2-player game
  where both players have \Buchi\ objectives
  or have \coBuchi\ objectives,
  the constrained SE existence problem is P-hard.
\end{thm}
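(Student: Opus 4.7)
The plan is to reduce from the classical problem of deciding membership in the winning region of a two-player zero-sum \Buchi\ game (respectively, \coBuchi\ game), which is well known to be P-complete via a direct encoding of alternating graph reachability. Computing $\llr{\{1\}}(\varphi)$ in a two-player arena for a single \Buchi\ or \coBuchi\ objective $\varphi$ is therefore a suitable P-hard source problem.

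Given such an instance---an arena $\calA$ with state set $S$, an objective $\varphi$ of the chosen type, and a distinguished state $s_0$---the reduction would construct the two-player non-zero-sum game $\calG' = (\calA, (\varphi, \psi))$ in which player~$2$ is assigned a trivially unsatisfiable objective of the same type: $\psi = \Buchi(\emptyset)$ in the \Buchi\ case and $\psi = \coBuchi(S)$ in the \coBuchi\ case, both equal to $\emptyset$. The produced instance of the constrained SE existence problem asks whether an SE at $s_0$ exists with payoff profile $v = (1,0)$. Since the arena is copied verbatim and only a constant-size objective is added, the transformation is computable in logarithmic space.

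To verify correctness, I would apply Theorem~\ref{thm:correctness} with $W = \{1\}$, $L = \{2\}$, $\varphi_W = \varphi$, and $\varphi_L = \emptyset$. The retaliation conjunct for $w = 1$ simplifies to $\llr{\{2\}}(\varphi \cup \neg\varphi) = \llr{\{2\}}(S^\omega) = S$, and the conjunct for $l = 2$ simplifies to $\llr{\{1\}}((\varphi\cup\emptyset)\cap\neg\emptyset) = \llr{\{1\}}(\varphi)$; hence $A_v = \llr{\{1\}}(\varphi)$, which is precisely the winning region of player~$1$ in the underlying zero-sum game.

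The step I expect to be the main obstacle is showing that $\llr{I}_{\subarena{\calA}{A_v}}(\varphi\cap\neg\emptyset) = A_v$, so that the Step~2 region does not shrink. The key observation is that the sub-arena $\subarena{\calA}{A_v}$ is trap-free: every outgoing edge from a player~$2$ state of $A_v$ already lands back in $A_v$ (otherwise player~$2$ could escape player~$1$'s winning region, contradicting $A_v = \llr{\{1\}}(\varphi)$), and every player~$1$ state of $A_v$ retains at least one outgoing edge within $A_v$, namely her winning choice. Player~$1$'s original winning strategy therefore remains available in the sub-arena, and a fortiori the full coalition $I = \{1,2\}$ can enforce $\varphi$ from every state of $A_v$. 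Combining the two steps gives $\SE_{(1,0)} = \llr{\{1\}}(\varphi)$, so the constructed SE instance is positive iff $s_0$ lies in the winning region of the original zero-sum \Buchi\ (or \coBuchi) game, yielding P-hardness in both cases.
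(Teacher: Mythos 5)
Your proposal is correct, but it takes a genuinely different route from the paper. The paper disposes of this theorem by citation: an SE with payoff profile $(1,1)$ in a 2-player game coincides with a doomsday equilibrium, and Lemma~6 of the doomsday-equilibria paper~\cite{CDFR17} already gives P-hardness of DE existence for \Buchi\ and \coBuchi\ objectives, so the bound transfers to the constrained SE existence problem with constraint $(1,1)$. You instead reduce directly from solving a two-player zero-sum \Buchi\ (or \coBuchi) game --- i.e.\ from alternating reachability --- by giving player~$2$ the empty objective and asking for constraint $(1,0)$, then invoking Theorem~\ref{thm:correctness} to compute $\SE_{(1,0)} = \llr{\{1\}}(\varphi)$. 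Your verification is sound: the two retaliation conjuncts collapse as you say, and the Step~2 region equals $A_v$ because the winning region of a prefix-independent objective is closed under player-$2$ moves and under player~$1$'s winning choices (this is the one place where you implicitly rely on prefix-independence of \Buchi/\coBuchi, which you should state). What each approach buys: the paper's argument is a one-line appeal to an external result, while yours is self-contained modulo the textbook P-completeness of alternating reachability, exercises the paper's own characterization theorem, and is in fact the same technique the paper itself uses later for the co-NP-hardness of the parity case (Theorem~\ref{thm:lower-parity-conp}, where $\SE_{(1,0)} = \llr{1}(\varphi_1\cap\neg(\neg\varphi_2))$ plays exactly the role of your $\SE_{(1,0)} = \llr{\{1\}}(\varphi)$). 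Both reductions are logspace and both establish the stated theorem.
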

\begin{proof}
  An SE in a 2-player game that yields the payoff profile $(1,1)$
  is equivalent to a doomsday equilibrium (DE)~\cite{CDFR17}.
  Lemma~6 in~\cite{CDFR17} shows that the problem to decide whether
  a DE exists in a 2-player game
  with \Buchi\ or \coBuchi\ objectives is P-hard,
  and the same lower bound applies to
  the constrained SE existence problem.
\end{proof}

\begin{thm}\label{thm:lower-parity-np}
  For a 2-player game
  where both players have parity objectives,
  the constrained SE existence problem is NP-hard.
\end{thm}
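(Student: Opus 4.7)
The plan is to reduce from deciding whether player~$2$ wins a two-player zero-sum game whose objective is a disjunction of two parity conditions. A parity condition with $2d$ priorities can be expressed as a $d$-pair Rabin condition, so a disjunction of two parity conditions can encode any multi-pair Rabin objective; winning such a Rabin game is NP-complete~\cite{EJ88}. This NP-hard problem fits naturally into the characterization of $\SE_{(1,1)}$ provided by Theorem~\ref{thm:correctness}: for $v=(1,1)$ in a two-player game with $\Phi=(\varphi_1,\varphi_2)$, we have $\SE_{(1,1)}=\llr{I}_{\subarena{\calA}{A_v}}(\varphi_1\cap\varphi_2)$ where
\begin{equation*}
  A_v = \llr{2}(\varphi_2\cup\neg\varphi_1) \cap \llr{1}(\varphi_1\cup\neg\varphi_2).
\end{equation*}
By Proposition~\ref{buchi-pro}, each $\neg\varphi_i$ is again a parity condition, so each factor of $A_v$ is a two-player winning region for a disjunction of two parity objectives.

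Given an instance $(G,s_0)$ of the disjunctive-parity game whose objective for player~$2$ is $\Parity(p_a)\cup\Parity(p_b)$, I would construct a two-player parity game $G'=(\calA',(\varphi_1',\varphi_2'))$ as follows. Inherit the state-to-player assignment of $G$, and insert before every original position a player-$1$-controlled intermediate state from which player~$1$ can either continue with the original move or divert to one of two new sinks: a ``punishment'' sink $s_\bot$ at which both $\varphi_1'$ and $\varphi_2'$ fail, and a ``cooperation'' sink $s_\top$ at which both $\varphi_1'$ and $\varphi_2'$ hold. Choose $\varphi_2'=\Parity(p_a)$ and $\varphi_1'=\Parity(p_b+1)$, so that $\varphi_2'\cup\neg\varphi_1'$ coincides with $\Parity(p_a)\cup\Parity(p_b)$ on the $G$-part; set the intermediate-state priorities high enough that they do not affect the minimum priority visited infinitely often. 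Then $\llr{1}(\varphi_1'\cup\neg\varphi_2')=S'$ holds trivially because player~$1$ can always divert to $s_\bot$, which satisfies $\neg\varphi_2'$; and $s_\top\in A_v$, together with the divert edges controlled by player~$1$, forces $\llr{I}_{\subarena{\calA'}{A_v}}(\varphi_1'\cap\varphi_2')=A_v$. Meanwhile, in $\llr{2}(\varphi_2'\cup\neg\varphi_1')$, the adversary player~$1$ never enters either sink (both sinks satisfy $\varphi_2'\cup\neg\varphi_1'$), so play remains in the $G$-part and this winning region on the $G$-part coincides with player~$2$'s winning region in $G$ for $\Parity(p_a)\cup\Parity(p_b)$. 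Consequently, an SE with payoff $(1,1)$ exists at the intermediate state of $s_0$ iff player~$2$ wins $G$ from $s_0$, yielding NP-hardness.

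The main obstacle will be the priority bookkeeping on the sinks and the intermediate states so that they do not perturb the encoded disjunctive-parity objective on the $G$-part. I expect that setting $p_a(s_\bot),p_b(s_\bot)$ to contradict both $\varphi_1'$ and $\varphi_2'$, setting $p_a(s_\top),p_b(s_\top)$ to satisfy both, and assigning intermediate states priorities strictly larger than all $G$-priorities will suffice, after which Theorem~\ref{thm:correctness} completes the reduction.
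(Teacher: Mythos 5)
Your reduction takes a genuinely different route from the paper's proof, which simply observes that an SE with payoff $(1,1)$ in a 2-player game coincides with a doomsday equilibrium and imports the NP-hardness of DE existence for parity objectives from Lemma~8 of \cite{CDFR17}. Your gadget (player-$1$-controlled intermediate states with a punishing sink $s_\bot$ and a cooperative sink $s_\top$, combined with the characterization of Theorem~\ref{thm:correctness}) is essentially a self-contained re-derivation of that cited lemma, and the construction itself checks out: $\llr{1}(\varphi_1'\cup\neg\varphi_2')$ is indeed all of the new state set, both sinks satisfy $\varphi_2'\cup\neg\varphi_1'$ so the adversary never uses them and $A_{(1,1)}$ traces player~$2$'s winning region of $G$, and diverting to $s_\top$ witnesses membership in the step-2 region. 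One small slip: the identity $\llr{I}_{\subarena{\calA'}{A_{(1,1)}}}(\varphi_1'\cap\varphi_2')=A_{(1,1)}$ is false at $s_\bot\in A_{(1,1)}$, whose unique continuation violates both objectives; this is harmless because only the intermediate state of $s_0$ matters for the reduction.

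There is, however, a genuine gap at the very first step, where you establish NP-hardness of the source problem. The claim that ``a disjunction of two parity conditions can encode any multi-pair Rabin objective'' is false: a parity condition is a Rabin \emph{chain} condition, so a union of two parity conditions is a union of two chain conditions, a strictly weaker class than general Rabin conditions (the Rabin-index hierarchy is strict). Hence you cannot import the NP-hardness of Rabin games from \cite{EJ88} this way. The fact you actually need --- that deciding the winner of a 2-player zero-sum game whose objective is a disjunction of two parity objectives is NP-hard --- is true, but it must be taken from the generalized parity games result of \cite{CHP07}; it is exactly the determinacy dual of the co-NP-hardness for intersections of two parity objectives that the paper itself invokes in the proof of Theorem~\ref{thm:lower-parity-conp}. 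With that citation substituted, your reduction goes through.
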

\begin{proof}
  Lemma~8 in~\cite{CDFR17} shows that the problem to decide whether
  a DE exists in a 2-player game
  with parity objectives is NP-hard.
  In the same way as Theorem~\ref{thm:lower-buchi},
  the same lower bound applies to
  the constrained SE existence problem.
\end{proof}

\begin{thm}\label{thm:lower-parity-conp}
  For a 2-player game
  where both players have parity objectives,
  the constrained SE existence problem
  with constraint $(1,0)$ is co-NP-hard.
\end{thm}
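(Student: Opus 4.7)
The plan is first to specialize Theorem~\ref{thm:correctness} to two players with constraint $v=(1,0)$ in order to reduce the $(1,0)$-SE existence question to a plain zero-sum winning question for a conjunction of two parity objectives, and then to transfer co-NP-hardness from a known hard case of that winning question.

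For the specialization, I would set $W=\{1\}$, $L=\{2\}$, $\varphi_W=\varphi_1$, and $\varphi_L=\varphi_2$. Then the $w=1$ factor of $A_v$ collapses, because $\llr{\{2\}}(\varphi_1\cup\varphi_2\cup\neg\varphi_1)=\llr{\{2\}}(S^\omega)=S$, and the $l=2$ factor becomes $\llr{\{1\}}((\varphi_1\cup\varphi_2)\cap\neg\varphi_2)=\llr{\{1\}}(\varphi_1\cap\neg\varphi_2)$, giving $A_{(1,0)}=\llr{\{1\}}(\varphi_1\cap\neg\varphi_2)$. Next I would use that a player-1 winning strategy for the $\omega$-regular objective $\varphi_1\cap\neg\varphi_2$ keeps every play inside her winning region: for any $s\in A_{(1,0)}$, pairing her winning strategy with an arbitrary response of player~2 yields a play that stays in $A_{(1,0)}$ and satisfies $\varphi_1\cap\neg\varphi_2$, so $A_{(1,0)}\subseteq\llr{\{1,2\}}_{\subarena{\calA}{A_{(1,0)}}}(\varphi_1\cap\neg\varphi_2)$; with the trivial reverse inclusion this gives $SE_{(1,0)}=A_{(1,0)}$. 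Using Proposition~\ref{buchi-pro}, $\varphi_1\cap\neg\varphi_2=\Parity(p_1)\cap\Parity(p_2+1)$, and the $(1,0)$-SE existence problem is therefore polynomially equivalent to deciding whether player~1 wins a 2-player zero-sum game whose objective is a conjunction of two parity objectives.

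For the lower bound itself, I would reduce from the known co-NP-hard problem of deciding the winner of such a conjunctive parity game (a standard result on generalized parity games due to Chatterjee, Henzinger, and Piterman). Given an instance $(\calA,p,q,s_0)$, I would form the 2-player SE game on the same arena and state partition with $\varphi_1=\Parity(p)$, $\varphi_2=\Parity(q+1)$, constraint $(1,0)$, and starting state $s_0$; since $\neg\varphi_2=\Parity(q)$ by Proposition~\ref{buchi-pro}, a $(1,0)$-SE exists at $s_0$ iff player~1 wins $\Parity(p)\cap\Parity(q)$ in the original game, and the reduction is clearly polynomial.

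The main obstacle is obtaining the co-NP-hardness of the two-parity conjunctive game in a directly citable form. If that is inconvenient, I will instead give a self-contained reduction from 3-UNSAT: player~2 picks a truth assignment through a variable-selection gadget, player~1 picks a clause to claim falsified, and carefully designed priority functions $p_1$ and $p_2$ on a small verification gadget ensure that a play satisfies $\Parity(p_1)\cap\Parity(p_2+1)$ iff the chosen clause is indeed violated by the chosen assignment; a loop-back edge extends the finite structure to an infinite game, and correctness is checked by a case analysis on the simple cycle the play eventually follows.
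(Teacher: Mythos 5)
Your proposal is correct and follows essentially the same route as the paper: both reduce from the co-NP-hard problem of deciding the winner of a two-player zero-sum game whose objective is a conjunction of two parity objectives \cite{CHP07}, via the identity $\SE_{(1,0)}=\llr{1}(\varphi_1\cap\neg\varphi_2)$ and the observation that the negation of a parity objective is again a parity objective. The only difference is that you derive that identity by specializing Theorem~\ref{thm:correctness} (which works, using prefix-independence of parity objectives to show the cooperation step is free), whereas the paper simply cites the two-player characterization from \cite{CHJ06}.
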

\begin{proof}
  The problem to decide whether
  there exists a player having a winning strategy is
  co-NP-hard for
  a 2-player zero-sum game where
  one player has an objective that is the intersection
  of two parity objectives~\cite{CHP07}.
  We show that we can reduce this problem to
  the constrained SE existence problem
  for 2-player non-zero-sum games with parity objectives
  with constraint $(1,0)$.

  Suppose that we are given
  a 2-player zero-sum game $\calG =(\calA,(\varphi,\neg\varphi))$
  and a state $s$ in $\calA$ as a starting state.
  Assume that $\varphi$ is the intersection of
  two parity objectives $\varphi_1$ and~$\varphi_2$;
  i.e.,
  $\varphi = \varphi_1 \cap \varphi_2$.

  Consider a game $\calG' = (\calA,(\varphi_1,\neg\varphi_2))$
  having the same game arena as~$\calG$.
  Since $\neg\varphi_2$ is a parity objective
  by Proposition~\ref{buchi-pro},
  $\calG'$ is a 2-player non-zero-sum game with parity objectives.
%
  In~\cite{CHJ06},
  the SE existence problem for 2-player games
  and each of the four possible payoff profiles has been investigated.
  It shows
  $\SE_{(1,0)} = \llr{1}(\varphi_1 \cap \neg(\neg\varphi_2))$,
  which implies $\SE_{(1,0)} = \llr{1}(\varphi)$.
  This means that
  $s \in\SE_{(1,0)}$ in $\calG'$ if and only if
  player~$1$ has a winning strategy at state~$s$ in~$\calG$.
\end{proof}


\section{Conclusion}

Complexity of the constrained SE existence problem has been investigated.
An extension of equilibria including SE and DE as special cases and
the constrained existence problem for the extended equilibria will be reported in future.


\bibliographystyle{ieicetr}
\bibliography{ncrsp}


\end{document}